\newtheorem{lma}{Lemma}
\newtheorem{thrm}{Theorem}
\begin{document}

\title{Online-Learning Deep Neuro-Adaptive Dynamic Inversion Controller for Model Free Control\\
\thanks{This paper was funded by the NASA Missouri Space Grant Consortium. We appreciate all of their generous aid.

This paper is dedicated to my late advisor, Dr. S. N. Balakrishnan, who unfortunately passed before its submission. His helpful aid was instrumental in the development of this document.}
}

\author{
\IEEEauthorblockN{Nathan Lutes\IEEEauthorrefmark{1}, K. Krishnamurthy\IEEEauthorrefmark{1}, Venkata Sriram Siddhardh Nadendla\IEEEauthorrefmark{2}, S. N. Balakrishnan\IEEEauthorrefmark{1}}
\\[-0.75ex]
\IEEEauthorblockA{\IEEEauthorrefmark{1}Mechanical and Aerospace Dept., \IEEEauthorrefmark{2}Computer Science Dept. \\
Missouri University of Science and Technology\\
Rolla, MO, USA\\
Email: \{nalmrb, kkrishna, nadendla, bala\}@mst.edu}
}

\maketitle

\begin{abstract}
Adaptive methods are popular within the control literature due to the flexibility and forgiveness they offer in the area of modelling. Neural network adaptive control is favorable specifically for the powerful nature of the machine learning algorithm to approximate unknown functions and for the ability to relax certain constraints within traditional adaptive control. Deep neural networks are large framework networks with vastly superior approximation characteristics than their shallow counterparts. However, implementing a deep neural network can be difficult due to size specific complications such as vanishing/exploding gradients in training. In this paper, a neuro-adaptive controller is implemented featuring a deep neural network  trained on a new weight update law that escapes the vanishing/exploding gradient problem by only incorporating the sign of the gradient. The type of controller designed is an adaptive dynamic inversion controller utilizing a modified state observer in a secondary estimation loop to train the network. The deep neural network learns the entire plant model on-line, creating a controller that is completely model free. The controller design is tested in simulation on a 2 link planar robot arm. The controller is able to learn the nonlinear plant quickly and displays good performance in the tracking control problem.
\end{abstract}

\begin{IEEEkeywords}
Nonlinear, adaptive, control, deep neural network, data-driven, model free
\end{IEEEkeywords}

\section{Introduction}
Adaptive control methods have been on the rise in recent years within the control theory literature. Their ability to correct errors in modelling or to capture and cancel unknown disturbances have proven to be quite desirable as more 'intelligent' and resilient control methods are sought. A quite popular and very powerful form of adaptive control has been found in controllers featuring artificial neural networks (ANNs). ANNs are machine learning algorithms with the extremely useful ability to model, or 'learn', most any function within some degree of error and have been used extensively for other tasks such as learning large and complex data sets in the realm of data science. Within the scope of control theory, ANNs are primarily utilized as estimation methods within controllers to learn uncertainty in plant dynamics or disturbances online and thus improve the controller performance over time. For example, \cite{rajagopal_balakrishnan_nguyen_krishnakumar_mannava_2009} and \cite{padhi_balakrishnan_unnikrishnan_2007} created adaptive dynamic inversion controllers by using a neural network state observer, termed modified state observer (MSO), in a separate estimation loop to learn modelling uncertainty or unknown disturbances in non-affine, non-square nonlinear systems. In \cite{ghafoor_balakrishnan_jagannathan_yucelen_2018}, this type of control scheme was extended to an event triggered control framework. Other examples of neural network control are prevalent throughout the literature for many different applications from aerospace to autonomous vehicles to robotics.
In \cite{6891229}, a missile guidance law was created by transforming a model predictive control scheme into a constrained quadratic programming problem and solved online using neural networks in a finite receding horizon. In \cite{8559329}, a tracking control solution for unmanned surface vehicle's was created based on a combination of Deep Deterministic Policy Gradient (DDPG) and neural network control. In \cite{9097881}, a neural network was employed to approximate plant dynamics and external disturbances for use in combining active disturbance attenuation (ADA) with robust dynamic programming (RADP). Recurrent neural networks were used in \cite{8489118} in the form of long short term memory neural networks (LSTM-NN) for an attitude estimation application to unmanned aerial vehicles (UAV)'s by training the network on attitude time-series data. 

Recent advances in machine learning have lead to the development of deep neural networks (DNNs) which are even more powerful, and duely more complex, versions of ANNs, usually with considerably larger structures. In \cite{7867471}, an in-depth look at DNNs is provided, discussing their development, popular models, applications and future research directions. Within the realm of control theory, DNNs are commonly used for reinforcement learning. The deep reinforcement learning (DRL) framework is presented in \cite{8203866} and then its applications and some current implementations are discussed.
A DRL based neural combinatorial optimization strategy for online vehicle route generation is developed in \cite{8693516}. Liang \emph{et al} \cite{8682051} created a novel guidance scheme based on model-based DRL where a DNN is trained as a guidance dynamics predictive model and incorporated into a model predictive path integral control framework. Xin \emph{et al} introduces a hybrid learning strategy for DRL control of commercial aircraft where a PID baseline controller is utilized during the preliminary online training of the DNN's and then the DNN calculated control policy is utilized once it produces better performance than the baseline controller in \cite{8757756}. In \cite{7487175}, model predictive control is used to train a DRL controller for control of a quadcopter drone. Sarabakha \emph{et al} \cite{8794314} uses DRL for UAV trajectory tracking by employing a baseline controller that simultaneously performs initial control and trains the DNN-based controller. Once the DNN controller produces sufficient policies, control switches to the neural network controller however online training continues for different sets of trajectories not used in the training phase.

DNN's are sometimes used in other applications besides DRL. For example, \cite{9123670} used DNN's  as approximate controllers by training on data created by model predictive control laws. Sarabakha \emph{et al} \cite{8809217} combined deep learning and fuzzy logic to create an online learning method for improved control of nonlinear systems. The DNN was pretrained offline on the recorded input-output dataset while a conventional controller controlled the system. Then when the DNN controller produced better results than the conventional, control was switched, with the neural network controller continuing to be trained online. R.  Chai  \emph{et  al} \cite{8939337} used DNNs for real-time attitude control on a six-DOF hypersonic vehicle reentry flight application. The neural networks were trained offline using pre-generated optimal trajectory data then implemented in real-time.
Many of the previous literature utilize some form of offline pre-training or otherwise initially use baseline models whilst the neural network learns the system and uncertainties and/or control policies. This is because unlike shallow neural networks, DNN's are much more difficult to train and typically take substantially more time. They are also more susceptible to problems such as vanishing/exploding gradient, making them difficult to use in purely online roles. However, with a recent advance in the training of neural networks, the power of DNN's can now be harnessed for online learning.

Recently, Bernstein \emph{et al} \cite{bernstein2021learning} discovered a new learning algorithm where the weights of a neural network can be updated individually via adding or subtracting a proportion of the weight in the direction that reduces the loss function gradient. In practice, this is approximated by multiplying by the exponential function with the parameter being a learning rate factor. The attribute of this algorithm that makes it especially attractive for training neural networks is the fact that the value of the gradient is not used, only its sign. This allows the algorithm to escape the vanishing/exploding gradient problem that has previously made it very difficult to train neural networks of a significant size in an online fashion. The discovery of the algorithm was found after Bernstein \emph{et al} defined a new distance between neural networks that could be used to facilitate training of one neural network to a theoretical 'optimal' neural network in \cite{bernstein2021distance}. Termed \emph{deep latent trust} \cite{bernstein2021distance}, this new distance definition allows for the relative change in loss function gradient as the result of network weight perturbations to be expressed as a product of the relative perturbations themselves. This in turn allows for definitions of different weight update laws that can be guaranteed to descend the loss function gradient and thus converge to the optimal weights.

This paper presents a novel control framework featuring a dynamic inversion controller combined with the Deep Neural Network Modified State Observer (DNN-MSO) - a model-less observer containing a DNN that trains the neural network online and provides a complete estimate of the plant dynamics. The observer converges to the true system by training the DNN on the error between the estimated system and the true system measurements using the multiplicative weight update tuning law and also uses an estimation error feedback term to improve estimation convergence properties. This smooths the control input and helps to mitigate high frequency chatter in the controlled system while the neural network is converging to the system dynamics. The neural network estimation property combined with the convergence guarantees in the weight tuning law proves that the DNN can adequately learn the system dynamics. This estimation is then used to replace the dynamics with the desired dynamics in the dynamic inversion controller. A Lyapunov analysis reinforces that the closed loop system is uniformly ultimately bounded with the bound being dependent on the controller feedback gain matrix and the estimation capabilities of the neural network. The controller is tested in simulation on a highly nonlinear robotic manipulator.

The body of the paper can be summarized as follows: in section 3, the DNN-MSO is explained and the controller formulation is given ending with the derivation of the tracking error dynamics; in section 4, the multiplicative weight update law and its derivation are presented and the performance guarantees for certain learning rate values are established; in section 5, the stability analysis of the error dynamics is explored using Lyapunov analysis and the concluding results are discussed; in section 6 the controller is tested in simulation via a tracking control problem of a  robot manipulator and the results are discussed and finally the paper concludes in section 7. 

\section{Controller Formulation}

Consider the dynamics of an uncertain, nonlinear system:
\begin{equation}
    \dot{X}(t) = f(X(t)) + B(u(t)+d(X(t)))
\end{equation}
where $f(X(t))$ is the nonlinear, known system dynamics, B is the control matrix, $u(t)$ is the control signal and $d(X(t))$ is the state dependent uncertainty. The system can be simplified by distributing $B$:
\begin{equation}
    \dot{X}(t) = f(X(t)) + \check{f}(X(t)) + Bu(t)
\end{equation}
\begin{equation*}
    \check{f}(X(t)) = Bd(X(t))
\end{equation*}
The performance goal is to drive the uncertain system towards a desired system having dynamics described by:
\begin{equation}
    \dot{X}_{d}(t) = f^{*}(X_{d}(t),u_{n})
\end{equation}
where $u_{n}$ is some nominal control signal. To accomplish this, we employ a dynamic inversion controller to replace the original system's dynamics with the desired dynamics. However, a significant problem with dynamic inversion controllers is their sensitivity to modelling errors in the system dynamics and thus the uncertainty in the plant poses a significant challenge. To remedy this, past literature \cite{rajagopal_balakrishnan_nguyen_krishnakumar_mannava_2009}, \cite{padhi_balakrishnan_unnikrishnan_2007}, \cite{ghafoor_balakrishnan_jagannathan_yucelen_2018} have relied on learning the uncertainty using a modified state observer (MSO).

The MSO is a neural network observer that allows estimation of the system uncertainty in a secondary estimation loop. The typical MSO uses the known dynamics as a baseline and builds upon this using a neural network that learns the uncertainty online by comparing the state estimation and the true state measurement. Finally, the MSO contains an estimation error feedback term which helps in diminishing any high frequency dynamics that might be introduced into the controller. The equation for the MSO is given by:
\begin{equation}
    \dot{\hat{X}}(t) = f(\hat{X}(t))+B(u(t)+\hat{d}(X(t))-K_{2}(X(t)-\hat{X}(t))
\end{equation}
Where $f(\hat{X}(t))$ is the known dynamics baseline, $\hat{d}(X(t))$ is the uncertainty estimate and $K_{2}(X(t)-\hat{X}(t))$ is the feedback term which helps to filter the control signal. Using the new learning algorithm provided by \cite{bernstein2021learning}, we can expand upon the MSO concept.

The DNN-MSO is a special modification of the MSO featuring a deep neural network that is completely data driven. The DNN learns the complete plant dynamics and any disturbances online. The modification of the original MSO to the DNN-MSO is quite simple with the DNN-MSO being described by:
\begin{equation}\label{DNN-MSO}
    \dot{\hat{X}}(t) = \hat{f}(X(t))+Bu(t) -K_{2}(X(t)-\hat{X}(t))
\end{equation}
One can see from (\ref{DNN-MSO}), that the known dynamics and uncertainty estimation has been replaced with just one estimation term, $\hat{f}(X(t))$, that represents the entire dynamic model of the plant. 

$\hat{f}(X(t))$ is the output of the deep neural network, in this case a deep multi-layered perceptron (MLP), which can be described mathematically as:
\begin{multline}
    \hat{f}(X(t))=\\
    W_{n-1}\sigma_{n-1}(W_{n-2}\sigma_{n-2}(W_{n-3}...W_{1}\sigma_{1}(W_{in}X(t))))
\end{multline}
where $W_{i}$ is the weights connecting layer $i$ to layer $i+1$ ($W_{in}$ connects the input of the network to layer 1), $\sigma_{i}$ is the output of layer {i} and $X(t)$ is the measured states. Note that for mathematical simplicity and convenience, the biases of the network are assumed implicit and thus omitted from the notation. Further note that $\sigma_{n}$ has been omitted because the activation function of the final layer is linear as is standard procedure for a MLP network with a continuous output. Now that the new estimation model has been described, the controller can be formulated.

Since the desired objective of the controller is to drive the true system $X(t)$ to the desired system $X_{d}(t)$, consider the following equation:
\begin{equation}\label{contForm1}
    \dot{X} - \dot{X}_{d} + K(X(t)-X_{d}(t)) = 0
\end{equation}
which enforces first order asymptotically stable error dynamics \cite{padhi_balakrishnan_unnikrishnan_2007}. Note that $K$ is a positive-definite, user-defined gain matrix. Substituting the system dynamics into (\ref{contForm1}):
\begin{multline}
     f(X(t)) + \check{f}(X(t)) + Bu(t) - f^{*}(X_{d}(t),u_{n})\\ + K(X(t)-X_{d}(t))=0
\end{multline}
Rearranging:
\begin{multline}
    Bu(t) = f^{*}(X_{d}(t),u_{n})-(f(X(t)) + \check{f}(X(t)))\\-K(X(t)-X_{d}(t))
\end{multline}
Assuming that the uncertainty estimate, $\hat{f}(X(t))$, is available and $\hat{f}(X(t)) \approx{f(X(t)) + \check{f}(X(t))}$, this can be substituted for the system dynamics as:
\begin{equation}
    Bu(t) = f^{*}(X_{d}(t),u_{n})-\hat{f}(X(t))-K(X(t)-X_{d}(t))
\end{equation}
Then, if $B$ is invertible, the dynamic inversion controller can be completed by multiplying by the inverse as:
\begin{multline}
    u(t) =B^{-1}( f^{*}(X_{d}(t),u_{n})-\hat{f}(X(t))\\-K(X(t)-X_{d}(t)))
\end{multline}
If $B$ is not invertible, than slack variables can be added in the manner discussed in \cite{rajagopal_balakrishnan_nguyen_krishnakumar_mannava_2009}. This amounts to adding and subtracting the slack matrix $B_{s}$ and the slack control signal $u_{s}$ to the formulation to augment the control matrix into an invertible form. The following derivation illustrates this point.
\begin{multline*}
    Bu(t) = f^{*}(X_{d}(t),u_{n})-\hat{f}(X(t))\\-K(X(t)-X_{d}(t)) + B_{s}u_{s} - B_{s}u_{s}
\end{multline*}
\begin{multline*}
    Bu(t)+ B_{s}u_{s} = f^{*}(X_{d}(t),u_{n})-\hat{f}(X(t))\\-K(X(t)-X_{d}(t)) + B_{s}u_{s} 
\end{multline*}
\begin{equation*}
    \bar{B} = [B\ B_{s}], \quad 
    \bar{U} = [u\ u_{s}]^{T}
\end{equation*}
\begin{multline*}
    \bar{B}\bar{U}=f^{*}(X_{d}(t),u_{n})-\hat{f}(X(t))\\-K(X(t)-X_{d}(t)) + B_{s}u_{s}
\end{multline*}
\begin{multline*}
    \bar{U}=\bar{B}^{-1}(f^{*}(X_{d}(t),u_{n})-\hat{f}(X(t))\\-K(X(t)-X_{d}(t)) + B_{s}u_{s})
\end{multline*}
Note that the actual control will need to be extracted from $\bar{U}$ and that the choice of slack variables affects the control signal. A methodology for selection of slack variable values is discussed in \cite{padhi_balakrishnan_unnikrishnan_2007}. Now that the controller has been defined, the error dynamics of the closed loop system can be discussed.

First, the tracking error, estimation error and estimate tracking error are defined as:
\begin{align}
    e_{r} &\triangleq{X(t) - X_{d}(t)}\\ e_{a} &\triangleq{X(t)-\hat{X}(t)}\\ \hat{e}_{r}&\triangleq{\hat{X}(t)-X_{d}(t)}
\end{align}
Now the tracking error can be rewritten as a combination of the estimation error and the estimate tracking error:
\begin{equation*}
    e_{r} = X(t) - \hat{X}(t)+\hat{X}(t) -X_{d}(t)
\end{equation*}
\begin{equation}\label{errDy1}
    e_{r}=e_{a}+\hat{e}_{r}
\end{equation}
Taking the derivative of (\ref{errDy1}):
\begin{equation}\label{erdot}
    \dot{e_{r}}=\dot{e_{a}}+\dot{\hat{e}}_{r}
\end{equation}
Expanding the estimation error derivative:
\begin{equation}\label{eadotfinal}
    \begin{array}{lcl}
    \dot{e}_{a} &=& \dot{X}(t) - \dot{\hat{X}}(t)  \\[2ex]
     &=& f(X(t)) + \check{f}(X(t)) + Bu(t) -\\ && \qquad \qquad (\hat{f}(X(t))+Bu -K_{2}(X(t)-\hat{X}(t)))
     \\[2ex]
    &=& f(X(t)) + \check{f}(X(t)) - \hat{f}(X(t))\\
    && \qquad \qquad
    +K_{2}(X(t)-\hat{X}(t))
    \end{array}
\end{equation}
Expanding the estimate tracking error derivative:
\begin{equation*}
    \begin{array}{lcl}
    \dot{\hat{e}}_{r}&=&\dot{\hat{X}}(t)-\dot{X}_{d}(t)
    \\[2ex]
    &=& \hat{f}(X(t))+Bu(t) -K_{2}(X(t)-\hat{X}(t))\\
    && \qquad \qquad -f^{*}(X_{d}(t),u_{n})
    \end{array}
\end{equation*}
Now using $Bu(t) = f^{*}(X_{d}(t),u_{n})-\hat{f}(X(t))-K(X(t)-X_{d}(t))$:
\begin{equation}\label{erhatdotfinal}
    \begin{array}{lcl}
    \dot{\hat{e}}_{r}&=&\hat{f}(X(t))+f^{*}(X_{d}(t),u_{n})-\hat{f}(X(t))\\
    && \qquad \qquad  -K(X(t)-X_{d}(t))-K_{2}(X(t)\\
    && \qquad \qquad -\hat{X}(t))-f^{*}(X_{d}(t),u_{n})
    \\[2ex]
    &=& -(K(X(t)-X_{d}(t))+K_{2}(X(t)-\hat{X}(t)))
    \end{array}
\end{equation}
Finally, inputting (\ref{eadotfinal}) and (\ref{erhatdotfinal}) into (\ref{erdot}):
\begin{multline*}
    \dot{e}_{r} = f(X(t)) + \check{f}(X(t)) - \hat{f}(X(t)) +K_{2}(X(t)-\hat{X}(t)) \\+ (-(K(X(t)-X_{d}(t))+K_{2}(X(t)-\hat{X}(t))))
\end{multline*}
which simplifies to:
\begin{equation}\label{erdotfinal}
    \dot{e}_{r}=\tilde{f}(X(t)) - K(X(t)-X_{d}(t))
\end{equation}
\begin{equation*}
    \tilde{f}(X(t)) = f(X(t)) + \check{f}(X(t)) - \hat{f}(X(t))
\end{equation*}

\section{Multiplicative Weight Update}

The backbone of any online neuro-adaptive controller is its ability to learn quickly and robustly.
In this case, the driving force behind the DNN-MSO is the novel multiplicative weight update law \cite{bernstein2021learning}.
As its name suggests, this update law is based on the multiplicative weight update algorithm discussed in \cite{DBLP:journals/toc/AroraHK12}. The equation for this tuning law is given simply as:
\begin{equation}\label{MUL}
    W_{i,l}  = W_{i,l}e^{\pm \eta}
\end{equation}
More precisely, weight $i$ of layer $l$ is updated by multiplying by an exponential term raised to the power of the learning rate $\eta$. In this manner, each weight is updated individually with the sign of $\eta$ to be determined later. The local updating property of the algorithm allows it to be invariant to complex architectures and vaguely defined layers. For the purpose of online neuro-adaptive control, the fundamental advantage of this type of update law is that it does not depend on gradient information and thus escapes the exploding/vanishing gradient problem making online training of deep neural networks feasible. The exact algorithm, the multiplicative adaptive moments based optimizer, provided by \cite{bernstein2021learning} is given below:

\begin{algorithm}
\SetAlgoLined
Select Algorithm Hyperparameters: $\eta, \eta^*, \sigma^{*}, \beta$
\\
Initialize Weights
\\
$\bar{g} \gets 0$ \qquad Initialise second moment estimate
\\
\Repeat{converged}{
$g \gets \ \text{StochasticGradient}()$
\\
$\bar{g}^{2} \gets (1-\beta) g^{2} + \beta \bar{g}^{2}$
\\
$W \gets \ W \odot \exp[-\eta \cdot \text{sign}(W) \odot \text{clamp}_{\eta^{*}/\eta}(g/\bar{g})]$
\\
$W \gets \ \text{clamp}_{\sigma^{*}}(W)$
}
\caption{Multiplicative Adaptive Moments}
\end{algorithm}

The hyperparameters of the algorithm are: the learning rate $(\eta)$, the maximum weight perturbation factor $(\eta^{*})$, the maximum weight value $(\sigma^{*})$ and the convex combination factor $(\beta)$. Note that the algorithm has a hard restriction on the size of the weights, set by the user, which enforces a hard upper and lower bound on the network weights. Although a learning rate is specified, the actual weight perturbation factor magnitude may be more, up to a user-defined maximum, so the maximum weight change for one update is hard-bounded as well. The actual update mechanism is very close to the vanilla multiplicative update rule shown in (\ref{MUL}) with a few changes for increased practicality. For example, $g/\bar{g}$ is $O(1)$ and therefore is just a higher precision version of $sign(g)$ which captures more information about the gradient without subjecting the algorithm to the exploding/vanishing gradient problems. It is also obvious that the sign of the update factor is determined based off of $sign(W)$ and $sign(g)$ with the reasoning behind this discussed in the following lemma and theorem.

The theory behind the multiplicative weight update tuning law is fundamentally based on a novel definition of functional distance for neural networks termed \textit{deep relative trust}, introduced in \cite{bernstein2021distance}, which is described as follows. Consider a neural network having $L$ layers and weight parameters $W = (W_{1}, W_{2},...,W_{L})$ with weight perturbations $\Delta W = (\Delta W_{1}, \Delta W_{2},...,\Delta W_{L})$. Consider a loss function over the network parameters: $\mathcal{L}(W)$. Let $g_{k}(W) = \nabla_{w_{k}}\mathcal{L}(W)$ denote the gradient of the loss. Then the gradient breakdown is bounded by:
\begin{align}\label{DRT}
    \frac{\|g_{k}(W+\Delta W)-g_{k}(W)\|_{F}}{\|g_{k}(W)\|_{F}} &\leq{\prod_{l=1}^{L} \bigg(1+\frac{\|\Delta W_{l}\|_{F}}{\|W_{l}\|_{F}}\bigg)-1} \nonumber\\
\end{align}
for all $k=1,...,L$. As can be seen from (\ref{DRT}), the relative change in Loss function gradient with respect to each weight layer is upper bounded by a product of the relative perturbation of the weights over all layers. Thus, we have a tractable model for gradient breakdown with which we can build a descent lemma which is the cornerstone of the Multiplicative Adaptive Moments optimiser theorem.

Next, we must introduce a general lemma for guaranteeing gradient descent.
\begin{lma}\label{descentLem}
Consider a continuously differentiable function $\mathcal{L}: \mathbb{R}^{n}\mapsto \mathbb{R}$ that maps $W \mapsto \mathcal{L}(W)$. Define $W \triangleq (W_{1}, W_{2},...,W_{L})$ where $L$ is the total number of parameter groups. Similarly, define $\Delta W \triangleq (\Delta W_{1}, \Delta W_{2},...,\Delta W_{L})$ as the perturbation of $W$. Let $\theta_{k}$ measure the angle between $\Delta W_{k}$ and the negative gradient $-g_{k}(W) = -\nabla_{W_{k}}\mathcal{L}(W)$. Then:
\begin{multline}
\mathcal{L}(W+\Delta W)-\mathcal{L}(W) \leq -\sum_{k=1}^{L} \|g_{k}(W)\|_{F}\|\Delta W_{k}\|_{F}\\\Bigg[cos(\theta_{k})-\max_{t\in[0,1]} \frac{\|g_{k}(W+t\Delta W)-g_{k}(W)\|_{F}}{\|g_{k}(W)\|_{F}}\Bigg]
\end{multline}
\end{lma}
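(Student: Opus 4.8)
The plan is to reduce the multivariate statement to a one-dimensional integral along the straight-line path from $W$ to $W+\Delta W$, and then to separate the ``first-order'' contribution (which produces the cosine term) from a remainder controlled by the gradient deviation. Concretely, I would define $\phi(t) \triangleq \mathcal{L}(W + t\Delta W)$ for $t \in [0,1]$. Since $\mathcal{L}$ is continuously differentiable, $\phi$ is differentiable and the chain rule gives $\phi'(t) = \sum_{k=1}^{L}\langle g_{k}(W + t\Delta W), \Delta W_{k}\rangle$, where $\langle\cdot,\cdot\rangle$ denotes the Frobenius inner product summed blockwise over the parameter groups. The fundamental theorem of calculus then yields
\[
\mathcal{L}(W+\Delta W) - \mathcal{L}(W) = \phi(1) - \phi(0) = \int_{0}^{1} \sum_{k=1}^{L} \langle g_{k}(W+t\Delta W), \Delta W_{k}\rangle\, dt.
\]

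Next I would add and subtract the base gradient inside each inner product, writing $g_{k}(W + t\Delta W) = g_{k}(W) + [g_{k}(W + t\Delta W) - g_{k}(W)]$. The base term $\langle g_{k}(W), \Delta W_{k}\rangle$ is independent of $t$, so its integral over $[0,1]$ is simply itself; by the definition of $\theta_{k}$ as the angle between $\Delta W_{k}$ and $-g_{k}(W)$, this equals $-\|g_{k}(W)\|_{F}\|\Delta W_{k}\|_{F}\cos(\theta_{k})$. Getting this sign bookkeeping right --- the inner product is taken against $g_{k}$, but the angle is measured against $-g_{k}$ --- is the one place where care is genuinely needed, and it is exactly what converts a descent direction into a negative (loss-decreasing) contribution. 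I expect this to be the main obstacle; everything else is mechanical once it is handled correctly.

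For the remainder term I would invoke the Cauchy--Schwarz inequality for the Frobenius inner product, $\langle g_{k}(W+t\Delta W) - g_{k}(W), \Delta W_{k}\rangle \leq \|g_{k}(W+t\Delta W) - g_{k}(W)\|_{F}\, \|\Delta W_{k}\|_{F}$, and then bound the integral over $t$ by replacing the integrand with its supremum, so that the remainder is at most $\|\Delta W_{k}\|_{F}\,\max_{t\in[0,1]} \|g_{k}(W+t\Delta W)-g_{k}(W)\|_{F}$; the maximum is attained because $t \mapsto g_{k}(W+t\Delta W)$ is continuous on the compact interval $[0,1]$. Summing both contributions over $k$ and factoring out $-\|g_{k}(W)\|_{F}\|\Delta W_{k}\|_{F}$ from each summand produces precisely the bracketed expression in the statement, which completes the argument. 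Note that the deep relative trust bound (\ref{DRT}) is not needed here: Lemma~\ref{descentLem} is a purely analytic descent estimate, and (\ref{DRT}) only enters later to render the $\max_{t}$ factor explicitly computable for neural networks.
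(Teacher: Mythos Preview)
Your proposal is correct and follows essentially the same route as the paper's proof: the fundamental theorem of calculus along the segment $t\mapsto W+t\Delta W$, splitting off the constant first-order term to obtain the cosine via the dot-product formula, and bounding the remainder by Cauchy--Schwarz together with the supremum over $t\in[0,1]$ (what the paper calls the ``integral estimation lemma''). Your write-up is in fact more careful about the sign bookkeeping and the attainment of the maximum than the paper's own argument.
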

\begin{proof}
By the fundamental theorem of Calculus, we have:
\begin{multline}\label{lmaPeq1}
    \mathcal{L}(W+\Delta W)-\mathcal{L}(W) = \sum_{k=1}^{L}\Bigg[g_{k}(W)^{T}\Delta W_{k}+\\\int_{0}^{1} [g_{k}(W+t\Delta W)-g_{k}(W)]^{T}\Delta W_{k} \,dt \Bigg]
\end{multline}
Replacing the first term on the right side of (\ref{lmaPeq1}) with the cosine formula for the dot product and using the integral estimation lemma to bound the second term, we arrive at:
\begin{multline}
\mathcal{L}(W+\Delta W)-\mathcal{L}(W) \leq -\sum_{k=1}^{L} \|g_{k}(W)\|_{F}\|\Delta W_{k}\|_{F}\\\Bigg[cos\theta_{k}-\max_{t\in[0,1]} \frac{\|g_{k}(W+t\Delta W)-g_{k}(W)\|_{F}}{\|g_{k}(W)\|_{F}}\Bigg].
\end{multline}
\end{proof}
Lemma \ref{descentLem} now gives us our formal descent guarantee conditions. More precisely, this is when:
\begin{align}
    \max_{t\in[0,1]} \frac{\|g_{k}(W+ t\Delta W)-g_{k}(W)\|_{F}}{\|g_{k}(W)\|_{F}}<cos(\theta_{k})\nonumber \\(for\ k = 1,...,L)
\end{align}
We see that the descent guarantee stems from the gradient breakdown. Now we can use our previously defined notion of \textit{deep relative trust} to derive the condition over the learning rate in which descent is guaranteed. This is expressed in the following theorem.
\begin{thrm}
Let $\mathcal{L}$ by the continuously differentiable loss function of a neural network of depth $L$ that obeys deep relative trust. For $k = 1,...,L$, let $0\leq{\gamma_{k}}\leq{\frac{\pi}{2}}$ denote the angle between $|g_{k}(W)|$ and $|W_{k}|$ (where $|\centerdot|$ denotes element-wise absolute value). Then the following update law:
\begin{equation}\label{UpdL}
    W \xrightarrow{} W + \Delta W = W \odot [1-\eta\ sign\ W \odot sign(g(W))]
\end{equation}
guarantees a decrease in the loss function provided that:
\begin{equation}
    \eta < (1+cos(\gamma_{k})^{\frac{1}{L}}-1, \quad (for\ all\ k = 1,...,L)
\end{equation}
\end{thrm}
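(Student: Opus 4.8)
The plan is to feed the specific sign-based update into the general descent lemma (Lemma~\ref{descentLem}), using deep relative trust to control the gradient-breakdown term and a direct computation to identify the cosine factor. First I would rewrite the update in perturbation form. Since $W \odot \mathrm{sign}\,W = |W|$ elementwise, the prescribed step is $\Delta W_k = -\eta\,|W_k|\odot \mathrm{sign}(g_k(W))$ for each layer $k$. Because multiplying entrywise by a sign pattern of $\pm1$ leaves the Frobenius norm unchanged, this immediately gives $\|\Delta W_k\|_F = \eta\|W_k\|_F$, so the relative perturbation $\|\Delta W_l\|_F/\|W_l\|_F$ equals $\eta$ in \emph{every} layer (here assuming no exact zeros in $W$ or $g$, the generic case).

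With the relative perturbations pinned to $\eta$, I would apply deep relative trust (\ref{DRT}) to the scaled perturbation $t\,\Delta W$ for $t\in[0,1]$. Each relative perturbation then becomes $t\eta\le\eta$, so the product bound reads $\prod_{l=1}^{L}(1+t\eta)-1 = (1+t\eta)^{L}-1$, which is increasing in $t$; taking the maximum over $t\in[0,1]$ yields
\[
\max_{t\in[0,1]} \frac{\|g_k(W+t\Delta W)-g_k(W)\|_F}{\|g_k(W)\|_F} \le (1+\eta)^{L}-1
\]
for every $k$. This bounds exactly the gradient-breakdown term appearing in the descent lemma.

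The key remaining step, and the one I expect to be the crux, is to show that the angle $\theta_k$ between $\Delta W_k$ and $-g_k(W)$ coincides with the angle $\gamma_k$ between $|W_k|$ and $|g_k(W)|$. Computing the inner product, $\langle \Delta W_k, -g_k(W)\rangle = \eta\langle |W_k|\odot \mathrm{sign}(g_k), g_k\rangle = \eta\langle |W_k|, |g_k(W)|\rangle$, where I use the elementwise identity $\mathrm{sign}(g)\odot g = |g|$. Dividing by $\|\Delta W_k\|_F\,\|g_k\|_F = \eta\,\||W_k|\|_F\,\||g_k(W)|\|_F$ gives
\[
\cos\theta_k = \frac{\langle |W_k|, |g_k(W)|\rangle}{\||W_k|\|_F\,\||g_k(W)|\|_F} = \cos\gamma_k.
\]
This is the heart of the argument: the sign factors are precisely what rotate the multiplicative step into alignment with the gradient magnitudes, converting the abstract $\theta_k$ of the lemma into the concrete, hypothesis-level angle $\gamma_k$.

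Finally I would combine the two estimates. Lemma~\ref{descentLem} guarantees a strict decrease whenever the gradient-breakdown term is strictly below $\cos\theta_k = \cos\gamma_k$ for all $k$. By the bound above, it suffices to enforce $(1+\eta)^{L}-1 < \cos\gamma_k$, i.e. $(1+\eta)^{L} < 1+\cos\gamma_k$. Since $\eta>0$ and $x\mapsto x^{L}$ is strictly increasing on the positive reals, this is equivalent to $\eta < (1+\cos\gamma_k)^{1/L}-1$ holding for all $k$, which is the claimed condition.
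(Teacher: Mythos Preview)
Your proof is correct and follows essentially the same route as the paper: both feed the update into Lemma~\ref{descentLem}, bound the gradient-breakdown term via deep relative trust to obtain $(1+\eta)^L-1$, establish $\theta_k=\gamma_k$ (you via an explicit inner-product computation, the paper via a chain of angle equalities), and then rearrange $(1+\eta)^L<1+\cos\gamma_k$. Your treatment of the $t$-maximization and the cosine identification is in fact slightly more explicit than the paper's, but the structure is identical.
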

\begin{proof}
    Using the definition of \textit{deep relative trust} (\ref{DRT}), we have:
    \begin{multline}
        \max_{t\in[0,1]} \frac{\|g_{k}(W+t\Delta W)-g_{k}(W)\|_{F}}{\|g_{k}(W)\|_{F}} \\\leq{\max_{t\in[0,1]}\prod_{l=1}^{L} \bigg(1+\frac{\|t\Delta W_{l}\|_{F}}{\|W_{l}\|_{F}}\bigg)-1}\\\leq{\prod_{l=1}^{L} \bigg(1+\frac{\|\Delta W_{l}\|_{F}}{\|W_{l}\|_{F}}\bigg)-1}
    \end{multline}
    Now using the results of the descent lemma (\ref{descentLem}), descent is guaranteed if:
    \begin{align}\label{AlmDone}
        \prod_{l=1}^{L} \bigg(1+\frac{\|\Delta W_{l}\|_{F}}{\|W_{l}\|_{F}}\bigg)< 1+cos(\theta_{k})\nonumber \\ (for\ all\ k = 1,...,L)
    \end{align}
    where $\theta_{k}$ measures the angle between $\Delta W$ and $-g_{k}(W)$. Considering (\ref{UpdL}), the perturbation is defined as $\Delta W = -\eta |W| \odot sign\ g(W)$ where $\|\Delta W_{*}\|_{F}/\|W_{*}\|_{F} = \eta$ for any possible subset of weights $W_{*}$. Let $\measuredangle(\centerdot, \centerdot)$ denote the angle between its operators. Now, $\theta_{k}$ and $\gamma_{k}$ are related by:
    \begin{align*}
       \theta_{k} &= \measuredangle(\Delta W_{k},-g_{k}(W))\\
        &= \measuredangle(-\eta\ |W_{k}| \odot sign(g_{k}(W)),-g_{k}(W))\\
        &= \measuredangle(|W_{k}| \odot sign(g_{k}(W)),g_{k}(W))\\
        &= \measuredangle(|W_{k}|,|g_{k}(W)|)\\
        &=\gamma_{k}
    \end{align*}
    Substituting the two previous results into (\ref{AlmDone}), we get:
    \begin{equation*}
        \prod_{l=1}^{L} (1+\eta) < 1+cos(\gamma_{k})
    \end{equation*}
    which becomes
    \begin{equation*}
        (1+\eta)^{L} <1+cos(\gamma_{k})
    \end{equation*}
    and finally
    \begin{equation}
        \eta < (1+cos(\gamma_{k})^{\frac{1}{L}}-1 
    \end{equation}
\end{proof}

One final remark for this section is that the update law used in the proof and that used in the algorithm are different. It is worth mentioning that for the practical algorithm we take advantage of the approximation: $w(1\pm \eta) \approx we^{\pm \eta}$.

\section{Stability Analysis}
The error dynamics were presented in section 3, culminating in (\ref{erdotfinal}). The results of section 4 support the embedded DNN's capacity to learn and approximate a function to a certain degree (proves that the actual weights approach the true weights over time) and exerts hard bounds on the weights. However the error dynamics of the function still require analysis. Consider the Lyapunov candidate function:
\begin{equation}
    L = e_{r}^{T}e_{r}
\end{equation}
Taking the derivative:
\begin{equation*}
    \dot{L}=e_{r}^{T}\dot{e}_{r} + \dot{e}_{r}^{T}e_{r}
\end{equation*}
\begin{multline}\label{ldot1}
    \dot{L}=e_{r}^{T}(\tilde{f}(X(t)) - K(X(t)-X_{d}(t))) +\\ (\tilde{f}(X(t)) - K(X(t)-X_{d}(t)))^{T}e_{r}
\end{multline}
Considering $e_{r} = X(t)-X_{d}(t)$, (\ref{ldot1}) becomes:
\begin{equation*}
    \dot{L}=e_{r}^{T}(\tilde{f}(X(t)) - Ke_{r})+ (\tilde{f}(X(t)) - Ke_{r})^{T}e_{r}
\end{equation*}
\begin{equation*}
    \dot{L}=e_{r}^{T}\tilde{f}(X(t)) - e_{r}^{T}Ke_{r} + \tilde{f}(X(t))^{T}e_{r} - e_{r}^{T}K^{T}e_{r}
\end{equation*}
\begin{equation}\label{ldot2}
    \dot{L}=2e_{r}^{T}\tilde{f}(X(t)) - 2e_{r}^{T}K^{T}e_{r}
\end{equation}
By Cauchy-Schwartz inequality:
\begin{equation}\label{csineq}
    |e_{r}^{T}\tilde{f}(X(t))|\leq{\|e_{r}\|_{2}\|\tilde{f}(X(t))\|_{2}}
\end{equation}
Because of the results of section 4 and thus the stability guarantee of the DNN for appropriate parameter choices, $\|\tilde{f}(X(t))\|_{2}\leq{\epsilon}$, which in turn makes (\ref{csineq}):
\begin{equation}\label{csineqfinal}
    |e_{r}^{T}\tilde{f}(X(t))|\leq{\|e_{r}\|_{2}\|\tilde{f}(X(t))\|_{2}}\leq{\|e_{r}\|_{2}\epsilon}
\end{equation}
Let the eigenvalues of the gain matrix $K$ be ordered such that $\lambda_{1}\leq{\lambda_{2}}\leq{}...\leq{\lambda_{n}}$. Then, since $K>0$, the following inequality holds:
\begin{equation}\label{eigineq}
    \lambda_{1}(K)e_{r}^{T}e_{r}\leq{e_{r}^{T}Ke_{r}}\leq{\lambda_{n}(K)e_{r}^{T}e_{r}}
\end{equation}
Using $e_{r}^{T}e_{r}=\|e_{r}\|_{2}^{2}$ and substituting the upper bound of (\ref{csineqfinal}) and the lower bound of (\ref{eigineq}) into (\ref{ldot2}), we arrive at:
\begin{equation}
    \dot{L}\leq{2\|e_{r}\|_{2}\epsilon-2\lambda_{1}(K)\|e_{r}\|_{2}^{2}}
\end{equation}
Which implies $\dot{L}\leq{0}$ when:
\begin{equation*}
    \|e_{r}\|_{2}\epsilon\leq{\lambda_{1}(K)\|e_{r}\|_{2}^{2}}
\end{equation*}
\begin{equation}\label{errbnd}
    \|e_{r}\|_{2}\geq{\frac{\epsilon}{\lambda_{1}(K)}}
\end{equation}
Equation (\ref{errbnd}) implies the system is uniformly ultimately bounded (UUB) with the bound on the tracking error ($e_{r}$) magnitude depending on the upper bound of the neural network approximation error ($\epsilon$) and the smallest eigenvalue of the gain matrix $K$ ($\lambda_{1}(K)$). Because the weight update law guarantees descent and thus stability in learning and because the controller formulation was based on stable error dynamics, the system is always stable if $\lambda_{1}(K)>0$.

\section{Robot Manipulator Simulation Results}

To test the performance of the proposed controller, a simulation was performed using a 2 link planar robot manipulator. The dynamics of this system are given below as:
\begin{equation}\label{RobDy}
    M(q)\Ddot{q}+V_{m}(q,\dot{q})\dot{q} + G(q) = \tau(t)
\end{equation}
where $q \in \mathbb{R}^{2}$ are the joint angles, $M(q)$ is the inertia matrix, $V_{m}(q,\dot{q})$ is the Coriolis/centripetal matrix, $G(q)$ is the gravity vector and $\tau_{t}$ is the control torque. (\ref{RobDy}) can be expressed in matrix form via:
\begin{equation}
    \begin{bmatrix}
\dot{q}\\
\Ddot{q}
\end{bmatrix} = \begin{bmatrix}
\dot{q}\\
-M^{-1}(q)(V_{m}(q,\dot{q})\dot{q} + G(q))
\end{bmatrix}+ \begin{bmatrix}
0\\
M^{-1}(q)
\end{bmatrix}\tau_{t}
\end{equation}
It is desired to have the joint angles follow the reference trajectory of:
\begin{equation*}
    \begin{bmatrix}
    q_{1,d}\\q_{2,d}
    \end{bmatrix}
    = \begin{bmatrix}
    sin(0.5t)\\cos(0.5t)
    \end{bmatrix}
\end{equation*}
The manipulator links had length $l_{1} = 1 m, l_{2} = 1 m$ and masses $m_{1} = 1 kg, m_{2} = 2.3 kg$. The controller PD gains and MSO gains were:
\begin{equation*}
K = 
    \begin{bmatrix}
    7.5& 0& 0& 0\\0& 7.5& 0& 0\\0& 0& 7.5& 0\\0& 0& 0& 7.5
    \end{bmatrix}, 
    \quad
    K_{2} = 
    \begin{bmatrix}
   3 0& 0& 0& 0\\0& 30& 0& 0\\0& 0& 30& 0\\0& 0& 0& 30
    \end{bmatrix}
\end{equation*}
The neural network used had four layers, sigmoidal hidden layer activation functions and its weights were initialised as $W\sim{N(0,\frac{3.6}{\sqrt{L}})}$ according to \cite{DBLP:journals/corr/Kumar17}.
The weight update parameters were $\eta = 0.001$, $\eta_{max}=100\eta$, $\sigma_{max} = 1250$, $B = 0.999$. The simulation time was 10 seconds, the integration time step was 0.001. The initial conditions were:
\begin{equation*}
    \begin{bmatrix}
    q_{1,0}\\q_{2,0}\\\dot{q}_{1,0}\\\dot{q}_{2,0}
    \end{bmatrix}
    =
    \begin{bmatrix}
    0.5\\0.5\\0\\0
    \end{bmatrix},
    \quad
    \hat{q}_0 = q_{0}
\end{equation*}
Fig. \ref{NoNN}-\ref{InTor} display the results of the simulation. Fig. \ref{NoNN} shows the tracking performance of the system with the neural network turned off. While the actual trajectories loosely follow the desired trajectories, it is obvious that there is significant error in the tracking performance. In Fig. \ref{Traj}, the desired and actual state trajectories are displayed, this time using the DNN-MSO to learn the system. The performance is much better when compared to Fig. \ref{NoNN} with the actual states converging to the desired states very quickly and showing excellent tracking throughout the simulation. The trajectory tracking error can be seen in Fig. \ref{Error}. Similar to the trajectory figure, the error figure displays the fast convergence and that the error stays very close to zero throughout time with some very minor oscillations in the error 1 signal. The torque inputs, Fig. \ref{InTor}, seem to be reasonable with moderate amplitude.

\begin{figure}[h!]
\centering
\includegraphics[width=3.45in]{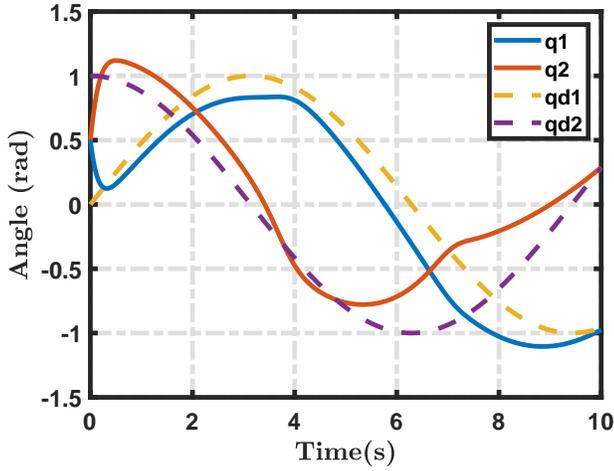}
\caption{Trajectory Tracking: No NN}
\label{NoNN}
\end{figure}

\begin{figure}[h!]
\centering
\includegraphics[width=3.45in]{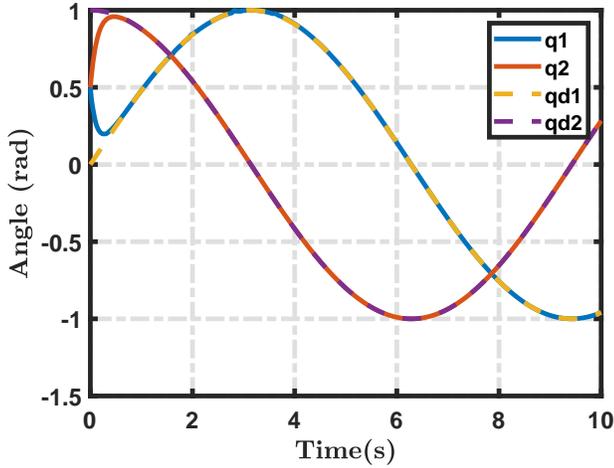}
\caption{Trajectory Tracking: with NN}
\label{Traj}
\end{figure}

\begin{figure}[h!]
\centering
\includegraphics[width=3.45in]{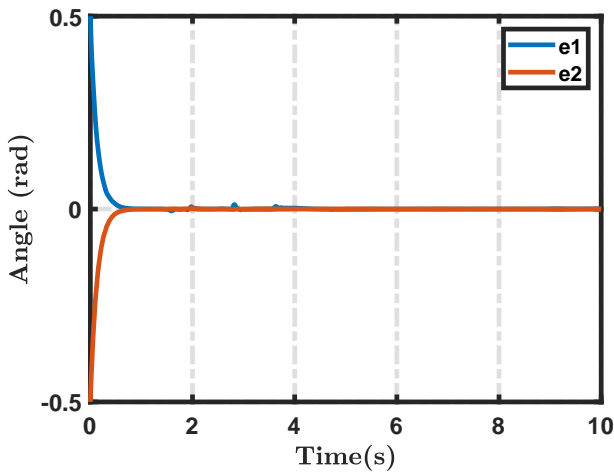}
\caption{Error: with NN}
\label{Error}
\end{figure}

\begin{figure}[h!]
\centering
\includegraphics[width=3.45in]{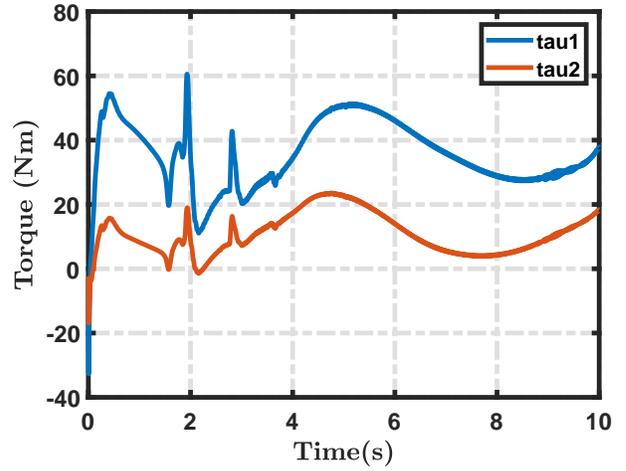}
\caption{Torque: with NN}
\label{InTor}
\end{figure}

\section{Conclusion}
This paper presented a novel model-free dynamic inversion controller featuring a DNN that was trained online using the multiplicative weight update training algorithm. The derivation of the controller was given as well as a stability analysis for a tracking problem and the update law used to train the network was shown and discussed. A simulation showing a joint angle tracking problem with a highly nonlinear robotic manipulator demonstrated the effectiveness of the controller. The potential of a DNN in a nonlinear controller was displayed as the network was able to learn quickly and effectively, leading to exceptional tracking performance as compared to a standard non-adaptive feedback controller. To the author's knowledge, this is the first work to feature a truly online training deep neural network in a dynamic inversion controller.

Future work includes implementing this controller on a real plant and examining the results. Real world implementation comes with its own set of challenges namely overcoming delay that might be present due to hardware restrictions and thus might affect the controller performance. Another problem to overcome is the algorithm's sensitivity to initial conditions. This was somewhat overcome in this paper by utilizing the initialization strategy outlined by \cite{DBLP:journals/corr/Kumar17} but there exists much room for improvement to achieve consistent results. Finally, the author's would like to extend the theory presented in this paper to backstepping control to be more utilizable in general nonlinear systems.

\bibliographystyle{IEEEtran}
\bibliography{ref}

\end{document}